\def\cal#1{\fam2#1}
\newcommand\blfootnote[1]{%
  \begingroup
  \renewcommand\thefootnote{}\footnote{#1}%
  \addtocounter{footnote}{-1}%
  \endgroup
}
\newtheorem{theorem}{\sc Theorem}
\newtheorem{lemma}{\sc Lemma}
\newtheorem{prope}{\sc Property}
\newtheorem{coro}{\sc Corollary}
\newtheorem{nota}{\sc Notation}
\newtheorem{defin}{\sc Definition}
\newtheorem{cla}{\sc Claim}
\newtheorem{rem}{\sc Remark}
\newenvironment{proof}{\par \sc Proof.\rm}{\hspace*{\fill}$\Box$\vspace{1ex}}
\newenvironment{remark}{\begin{rem}}{\hspace*{\fill}$\Diamond$\end{rem}}
\newenvironment{corollary}{\begin{coro}}{\end{coro}}
\newenvironment{definition}{\begin{defin}}{\end{defin}}
\def\ds{\displaystyle}
\def\nn{\mathbb N}
\def\nn{\mathbb N}
\newcommand{\ld}[1]{\mbox{depth}_{#1}}
\renewcommand{\emptyset}{\varnothing}
\begin{document}

\title{On Logical Depth and the Running Time of Shortest Programs}
\author{L. Antunes, A. Souto, and P.M.B. Vit\'anyi}
\date{}
\maketitle

\begin{abstract}
The logical
depth with significance $b$ of a finite binary string $x$
is the shortest running time of
a binary program for $x$ that can be compressed by at most 
$b$ bits. There is another definition
of logical depth. We give two theorems about the quantitative relation 
between these versions: the first theorem concerns a variation
of a known fact with a new proof, the second theorem and its proof are new. 
We select the above version of logical depth and show the following.
There is an infinite sequence of strings of increasing length 
such that for each $j$ there is a $b$ such that the logical depth 
of the $j$th string as a function of $j$ is incomputable
(it rises faster than any computable function) 
but with $b$ replaced by $b+1$ the resuling function is computable.
Hence the maximal gap between the logical depths
resulting from incrementing appropriate $b$'s by 1 rises faster than any
computable function.
All functions mentioned are upper bounded by the Busy Beaver function.
Since for every string its logical depth is nonincreasing in $b$,
the minimal computation time of the shortest programs for the sequence 
of strings as a function of $j$ rises faster than any 
computable function but not so fast as the Busy Beaver function.
% {\bf Classification}:
%  Logical depth, Kolmogorov complexity, Information measures, 
% Busy Beaver function.
\end{abstract}

\blfootnote{L. Antunes is with Instituto de Telecomunica\c{c}\~{o}es and
Faculdade de Ci\^{e}ncias Universidade do Porto. He was supported by FCT projects PEst-OE/EEI/LA0008/2011 and PTDC/EIA-CCO/099951/2008. Address: Departamento de Ci\^encia de Computadores R.Campo Alegre, 1021/1055, 4169 - 007 Porto - Portugal. Email: {\tt lfa@dcc.fc.up.pt}
A. Souto is with Instituto de Telecomunica\c{c}\~{o}es and 
Instituto Superior T\'ecnico, Universidade T\'ecnica de Lisboa. He was supported by FCT projects PEst-OE/EEI/LA0008/2011, PTDC/EIA-CCO/099951/2008 and the grant SFRH/BPD/76231/2011. Address:  Departamento de Matem\'atica IST Av. Rovisco Pais, 1, 1049-001 Lisboa - Portugal. Email: {\tt a.souto@math.ist.utl.pt}\\
P.M.B. Vit\'anyi is with CWI and University of Amsterdam. Address:
CWI, Science Park 123, 1098XG Amsterdam, The Netherlands. Email: {\tt Paul.Vitanyi@cwi.nl}
}
\section{Introduction}

The logical depth is related to complexity with bounded resources.
Computing a string $x$ from one of its shortest programs 
may take a very long time.
However, computing the same string from a simple `print$(x)$' program of length about $|x|$ bits takes very little time.

A program for $x$ of larger length than a given program 
for $x$ may decrease the computation time but in general
does not increase it.
Exceptions are, for example,
cases where unnecessary steps are considered. 
Generally we associate
the longest computation time with a shortest
program for $x$. There arises the question how much time can be saved 
by computing a given string from a longer program.

\subsection{Related Work}
The minimum time to compute a string by a $b$-incompressible 
program was first considered in
\cite{ben88} Definition 1. The minimum time
was called the {\em logical depth at significance 
$b$} of the string concerned.
Definitions, variations, discussion and early results can be found
in the given reference.
A more formal treatment, as well as an intuitive approach, was given in
the textbook \cite{LiVi}, Section 7.7. 
In \cite{afmv06} the notion of {\em computational}
depth is defined as $K^d(x) -K(x)$.
This would or would not equal the negative logarithm of the expression
$Q^d(x)/Q(x)$ in Definition~\ref{def.ld} as follows.
In \cite{lev74} L.A. Levin proved, in the so-called Coding Theorem 
\begin{equation}\label{eq.coding}
-\log Q(x) = K(x)+O(1)
\end{equation}
(see also \cite{LiVi} Theorem 4.3.3).
It remains to prove or disprove
$-\log Q^d (x) = K^d(x)$ up to a small additive term:
a major open problem in Kolmogorov complexity theory,
see \cite{LiVi} Exercises 7.6.3 and 7.6.4.
For Kolmogorov complexity notions see Section~\ref{sect.kolm},
and for $Q$ and $Q^d$ see \eqref{eq.U}.

\subsection{Results}
There are two versions of logical depth, Definition~\ref{def.ld}
and Definition~\ref{def.final}. 
The two versions are related.
The version of Definition~\ref{def.final} almost implies that of
Definition~\ref{def.ld} (Theorem~\ref{Teo1}), but
vice versa there is possible uncertainty (Theorem~\ref{Teo2}).
We use Definition~\ref{def.final}, that is, $\ld{b}^{(2)}(x)$.
There is an infinite sequence of strings 
$x_1,x_2, \ldots$ with $|x_{j+1}|=|x_j|+1$ and an infinite
sequence of positive integers $b_1,b_2, \ldots$, which satisfy the following.
For every $j>0$ the string $x_j$ is computed by two
programs that can be compressed by at most
$b_j,b_j+1$-bits and take least computation time
among programs of their lengths,
respectively. 
Let these computation times be
$d_1^j,d_2^j$ steps. Then the function
$h(j)=d_1^j-d_2^j$ rises faster than any computable function 
but not as fast as the Busy Beaver function,
the first incomputable function \cite{Ra62} 
(Theorem~\ref{theo.3} and Corollary~\ref{cor.2}). 
For the associated shortest programs $x^*_1,x^*_2, \ldots$ of $x_1,x_2, \ldots$ 
the function $s^*(j)$ defined as the minimum number of steps in the
computation of $x^*_j$ to $x_j$ ($j > 0$). Then the function $s^*$
rises faster than any computable function but again not
so fast as the Busy Beaver function (Corollary~\ref{theo.BB}) . 

The rest of the paper is organized as follows. Section~\ref{sect.2} 
introduces notation, definitions and basic results needed 
for the paper. Section~\ref{sect.3} defines two versions of logical depth 
and proves quantitative relations between them.
In Section~\ref{instabilidade}, we prove the other results mentioned.

\section{Preliminaries}\label{sect.2}
We use {\em string} or {\em program} to mean a finite binary string.
Strings are denoted by the 
letters $x$, $y$ and $z$. The {\em length} of a 
string $x$ (the number of occurrences of bits in it) is
denoted by $|x|$, and the {\em empty} string by $\epsilon$. Thus,
$|\epsilon|=0$.
The notation ``$\log$'' means the binary logarithm.
Given two functions $f$ and $g$, we say that $f \in O(g)$ if there is a constant $c >0$, 
such that $f(n) \leq c \cdot g(n)$, 
for all but finitely many natural numbers $n$. 
Restricting the computation time resource is
indicated by a superscript giving the allowed number of steps, usually
using $d$.

\subsection{Computability}
A pair of nonnegative integers,
such as $(p,q)$ can be interpreted as the rational $p/q$.
We assume the notion of a computable function with rational arguments
and values.
A function $f(x)$ with $x$ rational is \emph{semicomputable from below}
if it is defined by a rational-valued total computable function $\phi(x,k)$
 with $x$ a rational number
and $k$ a nonnegative integer
such that $\phi(x,k+1) \geq \phi(x,k)$ for every $k$ and
  $\lim_{k \rightarrow \infty} \phi (x,k)=f(x)$.
This means
that $f$ (with possibly real values)
can be computed in the limit from below
(see \cite{LiVi}, p. 35).  A function $f$ is  \emph{semicomputable
from above} if $-f$ is semicomputable from below.
 If a function is both semicomputable from below
and semicomputable from above then it is \emph{computable}.

\subsection{Kolmogorov Complexity}\label{sect.kolm}
We refer the reader to the textbook
\cite{LiVi} for details, notions, and history.
We use Turing machines with a read-only one-way input tape, one or more
(a finite number) of two-way work tapes at which the computation takes place,
and a one-way write-only output tape. All tapes are  semi-infinite
divided into squares, and each square can contain a symbol. 
Initially, the input tape is inscribed with a 
semi-infinite sequence of 0's and 1's.
The other tapes are empty (contain only blanks).
At the start, all tape heads scan 
the leftmost squares on their tapes. If the machine
halts for a certain input then the contents of the scanned segment
of input tape is called the {\em program} or {\em input}, and the contents of
the output tape is called the {\em output}. The machine thus described
is a {\em prefix Turing machine}. Denote it by $T$.
If $T$ terminates with program $p$ then the output is $T(p)$. The set 
${\cal P}=\{p:T(p)<\infty \}$ is {\em prefix-free} (no element of the set
is a proper prefix of another element). By the ubiquitous Kraft inequality
\cite{Kr49} we have 
\begin{equation}\label{eq.kraft}
\sum_{p \in {\cal P}} 2^{-|p|} \leq 1.   
\end{equation}
We extend the prefix Turing machine with an extra read-only tape called 
the {\em auxiliary} or {\em conditional}. Initially it contains the auxiliary 
information consisting of a string $y$. We write $T(p,y)$ and the
set ${\cal P}_y=\{p:T(p,y)<\infty \}$ is also prefix-free.
The relation \eqref{eq.kraft} holds also with ${\cal P}_y$ substituted
for ${\cal P}$ and $y$ 
is fixed auxiliary information. 
The unconditional case 
corresponds to the case where the conditional is $\epsilon$.

If $T_1,T_2, \ldots$ is a standard enumeration of prefix Turing machines,
then certain of those are called universal. {\em Universal} prefix Turing
machines are those that can simulate any other machine in the enumeration.
Among the universal prefix Turing machines we consider a special
subclass called {\em optimal}, see Definition 2.0.1 in \cite{LiVi}.
To illustrate this concept let $T_1,T_2, \ldots$ be a standard 
enumeration of prefix
Turing machines, and let $U_1$ be one of them.  If $U_1 (i,pp) = T_i(p)$
for every index $i$ and program $p$ and outputs 0 for inputs that are
not of the form $pp$ (doubling of $p$), 
then $U_1$ is also universal. However,
$U_1$ can not be used to define Kolmogorov complexity. For that we need 
a machine $U_2$ with $U_2(i,p)=T_i(p)$ for every $i,p$. 
The machine $U_2$ is called an {\em optimal} prefix Turing machine.
Optimal prefix Turing machines are a strict subclass of 
universal prefix Turing machines. The above
example illustrates the strictness. 
The term `optimal' comes from the founding paper \cite{Ko65}.

It is possible that two different optimal prefix Turing machines 
have different computation times for the same input-output pairs or
they have different sets of programs. To avoid these problems we
fix a reference machine. Necessarily, the reference machine has a 
certain number of worktapes.
A well-known result of \cite{HeSt66} states that $n$ steps of 
a $k$-worktape
prefix Turing machine can be simulated in $O(n \log n)$ steps
of a two-worktape prefix Turing machine 
(the constant hidden in the big-$O$ notation
depends only on $k$). 
Thus,
for  such a simulating optimal Turing machine $U$ we have
$U(i,p)=T_i(p)$ for all $i,p$; if $T_i(p)$ terminates in time $t(n)$
then $U(i,p)$ terminates in time $O(t(n)\log t(n))$. 
Altogether, we fix such a simulating 
optimal prefix Turing machine and call it the {\em reference
optimal prefix Turing machine} $U$.
\begin{definition}
\rm
Let $U$ be the reference optimal prefix Turing machine, 
and $x,y$ be strings. 
The \em prefix Kolmogorov complexity \em 
$K(x|y)$ of $x$ given $y$ is defined by 
$$
  K(x|y)=\min \{|q| : U(q,y)= x \}.
$$
(Earlier we wrote $U(i,p)$ while we write here $U(q,y)$. The two
are reconciled by writing $i,p=i,r,y=q,y$. That is, $p=r,y$ for a program
$r$, and $q=i,r$.) 

The notation $U^d(q,y)=x$ means that $U(q,y)=x$ within
$d$ steps.
The \em $d$-time-bounded
prefix Kolmogorov complexity \em $K^{d}(x|y)$ of $x$ given $y$ is
defined by
$$K^{d}(x|y) =\min \{|q| : U^d(q,y)=x \}.$$
\end{definition}
The default value for the auxiliary input $y$ for the program $q$, 
is the empty string $\epsilon$. To avoid overloaded notation we usually 
drop this argument in case it is there. 
Let $x$ be a string. Denote by $x^*$ the first shortest program in standard
enumeration such that $U(x^*)=x$.
A string is {\em $c$-incompressible} if a shortest program for it
is at most $c$ bits shorter than the string itself.

\section{Different Versions of Logical Depth}\label{sect.3}
The logical depth \cite{ben88} comes in two versions.
One version is based on $Q_U(x)$, 
the so-called {\em a priori} probability \cite{LiVi} and its 
time-bounded version $Q_U^d$. Here $U^d(p)$ means that $U(p)$ terminates in
at most $d$ steps. 
For convenience we drop the subscript on $Q_U$ and $Q^d_U$ 
and consider $U$ as understood.
\begin{equation}\label{eq.U}
Q(x)=\sum_{U(p)=x}2^{-|p|}, \;\;\;\;\;Q^d(x)=\sum_{U^d(p)=x}2^{-|p|}.
\end{equation}
\begin{definition}\label{def.ld}
\rm
Let $x$ be a string, $b$ a nonnegative integer.
The logical depth, version 1, 
of $x$ at significance level $\varepsilon= 2^{-b}$ is
\[
\ld{\varepsilon}^{(1)}(x) = \min \left\{d:\ds\frac{Q^d(x)}{Q(x)}\geq \varepsilon\right\}.
\]
\end{definition}
Using a program that is longer than another program for output $x$
can shorten the computation time. 
The $b$-significant logical depth of an object $x$ can also be defined 
as the minimal time 
the reference optimal prefix Turing machine needs 
to compute $x$ from a program which is $b$-incompressible.
\begin{definition}\label{def.final}
\rm
Let $x$ be a string, $b$ a nonnegative integer. 
The logical depth, version 2, of $x$ at significance level 
$b$, is: 
\[
\ld{b}^{(2)}(x) = \min \{d: |p| \leq K(p) + b \wedge U^d(p) = x\} \,.
\] 
\end{definition}
\begin{remark}
\rm
The program $x^*$ is the first shortest
program for $x$ in enumeration order. It may not be the fastest shortest program
for $x$. Therefore, if $U^d(x^*)=x$ then $d \geq \ld{0}^{(2)}(x)$. 
For $b > 0$ the value of $\ld{b}^{(2)}(x)$
is monotonic nonincreasing until 
$$\ld{|x|-K(x)+O(1)}^{(2)}(x)=O(|x| \log |x|),$$
where the $O(1)$ term represents the length of
a program to copy the literal representation
of $x$ in $O(|x| \log |x|)$ steps. 
If $x$ is random
($|x|=n$ and $K(x) \geq n$) then for $b=O(\log n)$ 
we have $\ld{b}^{(2)}(x)=O(n \log n)$---we print a literal copy of $x$.
These $x$'s, but not only these, are called {\em shallow}.
\end{remark}

For version (2)
every program $p$ of length at most $K(p)+b$ 
must take at least $d$ steps
to compute $x$. 
Version (1) states that 
$Q^d(x)/Q(x) \geq 2^{-b}$ and 
$Q^{d-1}(x)/Q(x) < 2^{-b}$.
Statements similar 
to Theorem~\ref{Teo1} and Remark~\ref{rem.Teo1} were shown in
Theorem 7.7.1 and Exercise 7.7.1 in \cite{LiVi} 
and derive from \cite{ben88} Lemma 3.
\begin{theorem}\label{Teo1}
If $\ld{b}^{(2)}(x)=d$ then
$\ld{2^{-\beta}}^{(1)}(x)=d$ with $b+1 < \beta \leq b+K(b)+O(1)$.
\end{theorem}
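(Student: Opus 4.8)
The plan is to convert the version-(2) hypothesis into the threshold-crossing language of version (1). First I will record that, by Definition~\ref{def.ld}, the equality $\ld{2^{-\beta}}^{(1)}(x)=d$ is equivalent to the two-sided condition
\[
\frac{Q^{d-1}(x)}{Q(x)} < 2^{-\beta} \le \frac{Q^{d}(x)}{Q(x)} ,
\]
so that the set of admissible $\beta$ is the half-open interval $[\,\log(Q(x)/Q^{d}(x)),\ \log(Q(x)/Q^{d-1}(x))\,)$, which is nonempty because the program realizing the depth makes $Q^{d}(x)>Q^{d-1}(x)$. It then suffices to produce one $\beta$ in this interval lying in $(b+1,\ b+K(b)+O(1)]$; equivalently, to prove the two bounds $\log(Q(x)/Q^{d-1}(x))>b+1$ and $\log(Q(x)/Q^{d}(x))\le b+K(b)+O(1)$.

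For the lower endpoint I will bound the a priori weight that can be amassed strictly before time $d$. By minimality of $d$ in Definition~\ref{def.final}, every program $q$ with $U^{d-1}(q)=x$ is $b$-compressible, i.e. $K(q)\le |q|-b-1$; mapping each such $q$ to its shortest program $q^{\ast}$ is injective and gives $U(U(q^{\ast}))=x$ with $|q^{\ast}|=K(q)\le |q|-b-1$. Hence
\[
Q^{d-1}(x)=\sum_{U^{d-1}(q)=x}2^{-|q|}\le 2^{-(b+1)}\!\!\sum_{U(U(r))=x}2^{-|r|},
\]
and the last sum is the a priori probability of $x$ under the prefix machine $r\mapsto U(U(r))$, which by the Coding Theorem \eqref{eq.coding} is within a constant factor of $Q(x)$. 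This yields $Q^{d-1}(x)\le 2^{-(b+1)+O(1)}Q(x)$ and hence $\log(Q(x)/Q^{d-1}(x))>b+1$; the additive constant coming from the Coding Theorem is exactly where care is needed to preserve the strict inequality.

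For the upper endpoint I will lower-bound $Q^{d}(x)$. Let $p$ be a $b$-incompressible program with $U^{d}(p)=x$ witnessing the depth; it contributes $2^{-|p|}$ to $Q^{d}(x)$, and $|p|\le K(p)+b$. The plan is to show $K(p)\le K(x)+K(b)+O(1)$, whence $|p|\le K(x)+b+K(b)+O(1)$ and, using $Q(x)=2^{-K(x)+O(1)}$,
\[
Q^{d}(x)\ge 2^{-|p|}\ge 2^{-(K(x)+b+K(b))-O(1)}\ge 2^{-(b+K(b)+O(1))}Q(x),
\]
i.e. $\log(Q(x)/Q^{d}(x))\le b+K(b)+O(1)$. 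The complexity bound on $p$ is to come from describing a fastest $b$-incompressible program for $x$ out of a shortest program for $x$ together with a self-delimiting encoding of the significance level $b$ (costing $K(b)+O(1)$ bits), which is the source of the $K(b)$ term in the statement.

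I expect this upper-endpoint step to be the main obstacle. The difficulty is that the defining property of $p$ --- being $b$-incompressible --- is only co-semicomputable, since $K$ is merely upper semicomputable, so a dovetailing search cannot in general certify incompressibility and halt on $p$. One must instead exploit the minimality of $d$ (every program preceding $p$ in halting order is eventually certified $b$-compressible and discarded) to make the relevant search finite, while keeping the extra information needed to pin down $p$ within $K(b)+O(1)$ bits. Controlling this description overhead, rather than the essentially routine weight estimate of the lower endpoint, is the crux of the argument.
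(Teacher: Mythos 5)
Your opening reduction is correct, and in fact a bit cleaner than the paper's own restatement: $\ld{2^{-\beta}}^{(1)}(x)=d$ holds iff $Q^{d-1}(x)/Q(x)<2^{-\beta}\le Q^{d}(x)/Q(x)$, the interval of admissible $\beta$ is nonempty because the witness of $\ld{b}^{(2)}(x)=d$ halts within $d$ but not within $d-1$ steps, and it suffices to prove your two endpoint bounds. The genuine gap is the upper endpoint, $\log(Q(x)/Q^{d}(x))\le b+K(b)+O(1)$, for which you offer only a plan that you yourself identify as blocked: you want $K(p)\le K(x)+K(b)+O(1)$ for the witness $p$ by describing it from $x^*$ plus about $K(b)$ bits, but since $b$-incompressibility is only co-semicomputable no search of this kind can isolate $p$, and the information that would make the search finite (the time bound $d$, or $p$'s index in the halting order of programs for $x$) has complexity that need not be anywhere near $K(b)$. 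You call this the crux and leave it open, so the harder half of the theorem is not proved. The paper's proof of this half rests on a mechanism your sketch is missing: it never identifies the witness at all. It assumes for contradiction that $Q^{d}(x)<2^{-B}Q(x)$ with $B=b+K(b)+c$, and constructs a lower semicomputable semimeasure (given $B$) that processes the programs for each string in halting order, granting each program $p$ weight $2^{-|p|+B}$ as long as the running total stays below $Q(x)$; the contradictory assumption guarantees that every program halting within $d$ steps receives its full weight, so by the Coding Theorem every such program --- in particular the $b$-incompressible witness --- is $(B-K(B)-O(1))$-compressible, and $B-K(B)-O(1)>b$ for a suitable constant $c$ yields the contradiction. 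That single argument handles all fast programs at once and is the actual source of the $K(b)$ term; some such device replacing your ``describe the witness'' step is indispensable.

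Your lower endpoint also does not close. From $K(q)\le|q|-b-1$ for all $q$ halting within $d-1$ steps and the injection $q\mapsto q^{*}$ you correctly get $Q^{d-1}(x)\le 2^{-(b+1)}\sum_{U(U(r))=x}2^{-|r|}$, but the domination of this $U\circ U$ a priori probability by $Q(x)$ costs a multiplicative constant $2^{O(1)}$ in the unfavorable direction: the conclusion is $\log(Q(x)/Q^{d-1}(x))\ge b+1-O(1)$, not the strict inequality $>b+1$ that you assert. Acknowledging that ``care is needed'' does not supply the missing argument; as written, your proof establishes the theorem only with the left endpoint weakened to $b+1-O(1)$. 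The paper claims the clean constant by counting the fast programs and their compressed versions as disjoint contributions to the probability of computing $x$, so that $Q(x)\ge 2^{c-b-1}Q(x)+2^{-b-1}Q(x)$ and hence $c<b+1$ with no additive slack; whatever one thinks of how that step absorbs the simulation constant, your version explicitly concedes the constant and therefore does not yield the bound as stated.
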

\begin{proof}
The theorem states: 
if $\ld{b}^{(2)}(x)=d$ then
\[
\frac{1}{2^{b+K(b)+O(1)}} \leq \frac{Q^d(x)}{Q(x)}
< \frac{1}{2^{b+1}}.
\]
(Right $<$) 
By way of contradiction $Q^d(x) \geq 2^{-b-1}Q(x)$.
If for a nonnegative constant $c$
all programs computing $x$ within $d$ steps are $c$-compressible, 
then the twice iterated
reference optimal Turing machine (in its role as decompressor) computes
$x$ with probability $2^c Q^d(x) \geq 2^{c-b-1}Q(x)$ 
from the $c$-compressed versions.
But $Q(x) \geq 2^{c-b-1}Q(x)+2^{-b-1} Q(x)$.
Therefore $c-b-1 < 0$  that is $c < b+1$. This implies that
there is a program computing
$x$ within $d$ steps that is $(b+1)$-incompressible. Then
$\ld{b+1}^{(2)}(x)=d$ contradicting the assumption that
$\ld{b}^{(2)}(x)=d$.
Hence $Q^d(x) < 2^{-b-1} Q(x)$.
(Left $\leq$) 
By way of contradiction 
$Q^d(x) < 2^{-B} Q(x)$ with $B=b+K(b)+c$ and $c$ is
a large enough constant to derive the contradiction below.
Consider the following lower semicomputable
semiprobability (the total probability is less than 1):
For every string $x$ we enumerate
all programs $p$ that compute $x$ in order of halting (time), and
assign to each halting $p$ the probability $2^{-|p|+B}$ until
the total probability would pass $Q(x)$ with the next halting $p$.
Since $Q(x)$ is lower semicomputable we can postpone 
assigning probabilities.
But eventually or never for a program $p$
the total probability may pass $Q(x)$ and this $p$ and all subsequent
halting $p$'s for $x$ get assigned probability 0.   
Therefore, the total probability assigned to all halting programs
for $x$ is less than $Q(x)$. Since by contradictory
assumption $Q^d(x) < 2^{-B} Q(x)$ we have
$2^B Q^d(x) = \sum_{U^d(p)=x} 2^{-|p|+B} < Q(x)$. Since $Q(x) < 1$ we
have $Q^d(x) < 2^{-B}$ and therefore all programs that
compute $x$ in at most $d$ steps are $(B-O(1))$-compressible
given $B$, and therefore $(B-K(B)-O(1))$-compressible.

Since $\ld{b}^{(2)}(x)=d$, there exists a $b$-incompressible 
program from which $x$ can be computed in $d$ steps. 
Since $K(b+K(b)) \leq K(b,K(b))+O(1) \leq K(b)+O(1)$
by an easy argument \cite{LiVi} and $K(c) = O(\log c) < c/2$
we have that
$B-K(B)-O(1)=
b+K(b)+c-K(b+K(b)+c)-O(1) \geq b+K(b)+c-K(b+K(b))-K(c)-O(1)>b$.
This gives the required contradiction.
Hence $Q^d(x) \geq 2^{-B} Q(x)$.
\end{proof}

\begin{remark}\label{rem.Teo1}
\rm
We can replace $K(b)$ by $K(d)$ by changing the construction of the
semiprobability: knowing $d$ we generate all programs that compute 
$x$ within $d$ steps and let the semiprobabilities be proportional to 
$2^{-|p|}$ and the sum be at most $Q(x)$. In this way $K(b)$ 
in Theorem~\ref{Teo1} is substituted by $\min\{K(b),K(d)\}$. 
\end{remark}
\begin{remark}\label{rem.conv}
\rm
Possibly $Q^d(x)=Q^{d+1}(x)$. 
Moreover, while $Q^d(x)/Q(x) \geq 2^{-b}$ 
for $d$ least, possibly also $Q^d(x)/Q(x) \geq 2^{-b+1}$.
Both events happen, for example, if $p$ computes $x$ in $d$ steps but
not in $d-1$ steps, there is no program for $x$ halting in $d+1$ steps,
and $|p| < b-1$ while $Q^{d-1} < 2^{-b}$. 
In the next theorem if we write $\ld{2^{-b}}^{(2)}(x)=d$ then 
$d,b$ are least integers for which this equality holds. 
\end{remark}
\begin{theorem}\label{Teo2}
Let $U^d(p)=U^d(q)=U^d(r)=x$ and $U^{d-1}(p),U^{d-1}(r) = \infty$ 
(the  computation does not
halt in $d-1$ steps) with $K(p)$ least, $|q| \leq K(x)+b$, 
and $||r|-|q||$ (the absolute value of the difference in lengths) 
is minimal. 
If $\ld{2^{-b}}^{(1)}(x)=d$ then
$\ld{\beta}^{(2)}(x)=d$ with 
$b-K(p)+K(x)-O(1) \leq \beta \leq b+||r|-|q||-O(1)$.
\end{theorem}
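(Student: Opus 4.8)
The plan is to reduce the claimed two‑sided bound on $\beta$ to two separate estimates on the compressibility of programs that halt within $d$ respectively within $d-1$ steps, and then to exploit the monotonicity of $\ld{\beta}^{(2)}(x)$ in $\beta$. For each time bound $t$ put $A_t=\min\{|s|-K(s): U^t(s)=x\}$, the least compressibility of any program producing $x$ in at most $t$ steps; since $\le d-1$ programs form a subset of the $\le d$ programs one has $A_d\le A_{d-1}$. Enlarging $\beta$ only relaxes the constraint $|s|\le K(s)+\beta$, so $\beta\mapsto\ld{\beta}^{(2)}(x)$ is nonincreasing and, by the convention of Remark~\ref{rem.conv}, $\ld{\beta}^{(2)}(x)=d$ holds exactly when $A_d\le\beta<A_{d-1}$. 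Thus it suffices to prove the two estimates $A_d\le b-K(p)+K(x)-O(1)$ and $A_{d-1}>b+||r|-|q||-O(1)$, which together place the whole interval $[\beta_-,\beta_+]$ inside $[A_d,A_{d-1})$.

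For the lower endpoint I would prove $A_d\le b-K(p)+K(x)-O(1)$ by exhibiting a witness among the prescribed programs. The coding theorem \eqref{eq.coding} turns the hypothesis $Q^d(x)\ge 2^{-b}Q(x)$ into the existence of a program halting within $d$ steps of length at most $K(x)+b+O(1)$; this is the role of $q$, for which $|q|\le K(x)+b$. Any program that outputs $x$ has complexity at least $K(x)-O(1)$, and the minimality of $K(p)$ among programs halting \emph{exactly} at step $d$ gives the sharper lower bound $K(q)\ge K(p)-O(1)$ after $q$ is replaced, if necessary, by the nearby exactly‑$d$ program $r$ (whose length differs from $|q|$ by only $||r|-|q||$). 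Combining $|q|\le K(x)+b$ with $K(q)\ge K(p)-O(1)$ yields $A_d\le |q|-K(q)\le K(x)+b-K(p)+O(1)$; the correction $-K(p)+K(x)$ is exactly the profit extracted from the minimality of $K(p)$.

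For the upper endpoint I would show that \emph{every} program halting within $d-1$ steps is highly compressible, reusing the lower semicomputable semiprobability from the proof of Theorem~\ref{Teo1} together with the sharpening of Remark~\ref{rem.Teo1}. Applying that construction to the programs for $x$ that halt within $d-1$ steps, allotting $2^{-|s|+B}$ capped at $Q(x)$, is legitimate for every $B$ with $2^{B}Q^{d-1}(x)\le Q(x)$. The minimality of $d$ gives $Q^{d-1}(x)<2^{-b}Q(x)$, and comparing the length $|r|$ of the exactly‑$d$ program with the length $|q|\approx K(x)+b$ dictated by \eqref{eq.coding} should upgrade this to roughly $Q^{d-1}(x)\le 2^{-(b+||r|-|q||)}Q(x)$, so that $B=b+||r|-|q||$ becomes admissible. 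Domination of the semiprobability by the universal one---with Remark~\ref{rem.Teo1} trading the cost $K(B)$ for $K(d)$---then forces $K(s)\le|s|-B+O(1)$ for every such $s$, i.e.\ compressibility exceeding $b+||r|-|q||-O(1)$, which is the desired lower bound on $A_{d-1}$.

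Combining the two estimates with the characterization $\ld{\beta}^{(2)}(x)=d\iff A_d\le\beta<A_{d-1}$ gives $\ld{\beta}^{(2)}(x)=d$ for all $\beta$ in $[\beta_-,\beta_+]$. I expect the genuine obstacle to be the upper‑endpoint estimate: extracting the precise extra savings $||r|-|q||$ from the gap between $Q^{d-1}(x)$ and $Q^{d}(x)$, and keeping the additive $O(1)$'s and the $K(\cdot)$ overhead under control through Remark~\ref{rem.Teo1}, is delicate. One must also be careful that the witness for $A_d$ is taken to halt at \emph{exactly} step $d$ rather than earlier, so that the two bounds genuinely concern the same value $d$ and not some strictly smaller depth; this is precisely why the exactly‑$d$ programs $p$ and $r$, and not merely the within‑$d$ program $q$, are built into the hypothesis.
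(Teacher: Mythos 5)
Your reduction through $A_t=\min\{|s|-K(s):U^t(s)=x\}$ and the equivalence $\ld{\beta}^{(2)}(x)=d\iff A_d\le\beta<A_{d-1}$ is correct and is a clean way to organize the question, but you then set out to prove the \emph{converse} of what the theorem asserts. Under the convention of Remark~\ref{rem.conv} (the integers in these equalities are the least ones realizing them), the paper's proof establishes a necessary condition on the significance level: if $\ld{\beta}^{(2)}(x)=d$ then $\beta_-\le\beta\le\beta_+$, i.e.\ in your notation $\beta_-\le A_d\le\beta_+$. You instead try to show that every $\beta\in[\beta_-,\beta_+]$ satisfies $\ld{\beta}^{(2)}(x)=d$, which requires $A_d\le\beta_-$ and $A_{d-1}>\beta_+$ --- both the exact reverses of what is true and provable here. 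In particular your estimate $A_d\le b-K(p)+K(x)-O(1)$ contradicts the paper's ``(Left $\le$)'' step: from $Q^d(x)<2^{-b+1}Q(x)$ (Remark~\ref{rem.conv}) and $Q(x)=2^{-K(x)+O(1)}$ (Coding Theorem), \emph{every} program $s$ halting within $d$ steps has $2^{-|s|}\le Q^d(x)$, hence $|s|>K(x)+b-O(1)$, hence $|s|-K(s)\ge b-(K(s)-K(x))-O(1)$; this is a \emph{lower} bound of the form $\beta_-$ on $A_d$, not an upper bound. If your inequality were also provable, the left endpoint would be pinned to $A_d$ exactly, a far stronger statement than the theorem makes.

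Even granting your reading, both key estimates have genuine holes. For the lower endpoint, $q$ gives $A_d\le|q|-K(q)\le K(x)+b-K(q)$, and to reach $b-K(p)+K(x)-O(1)$ you need $K(q)\ge K(p)-O(1)$; but the minimality of $K(p)$ ranges only over programs halting \emph{exactly} at step $d$, so this is available only when $q$ itself halts at step $d$, and your repair (swap $q$ for $r$) inflates the length bound to $|q|+||r|-|q||$ and lands at $A_d\le b-K(p)+K(x)+||r|-|q||+O(1)$, overshooting by precisely the quantity you cannot afford. For the upper endpoint, the pivotal ``upgrade'' $Q^{d-1}(x)\lesssim 2^{-(b+||r|-|q||)}Q(x)$ is asserted, not derived: nothing in the hypotheses ties the length gap $||r|-|q||$ between two programs halting at exactly step $d$ to the probability mass of programs halting within $d-1$ steps. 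What the semiprobability construction of Theorem~\ref{Teo1} (with Remark~\ref{rem.Teo1}) actually yields from $Q^{d-1}(x)<2^{-b}Q(x)$ is that every within-$(d-1)$ program is $(b-\min\{K(b),K(d)\}-O(1))$-compressible, i.e.\ $A_{d-1}\ge b-\min\{K(b),K(d)\}-O(1)$, which falls short of your target $b+||r|-|q||-O(1)$ by $||r|-|q||+\min\{K(b),K(d)\}$; even in the most favorable case $r=q$ you still face the $\min\{K(b),K(d)\}$ loss, which is of the same nature as the open problem on time-bounded coding mentioned in the introduction. (A minor further point: the Coding Theorem alone does not convert $Q^d(x)\ge 2^{-b}Q(x)$ into a within-$d$ program of length $\le K(x)+b+O(1)$, since that mass could be spread over many long programs; the theorem avoids this by making the existence of $q$ part of the hypothesis.)
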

\begin{proof}
(Left $\leq$) By assumption 
$Q^d(x)/Q(x) \geq 2^{-b}$ and by 
Remark~\ref{rem.conv} we have $Q^d(x)/Q(x) < 2^{-b+1}$. 
Since it is easy to see that $2^{-K(x)} < Q(x)$, we
have $Q(x)=2^{-K(x)+c}$ for a positive constant $c$ 
by \eqref{eq.coding}. Therefore
\[
Q^d(x)= \sum_{U^d(p)=x} 2^{-|p|} < 2^{-(K(x)+b-1-c)}.
\]
Hence every program $p$ such that $U^d(p)=x$ satisfies
$K(x)+b-1-c < |p|$. Denote the set of these prograns by $D$. 
Let $p \in D$ be such that 
$U^{d-1}(p)= \infty$.
Let $(D-1)$ be the set of such programs. Then $(D-1) \subseteq D$.
By the convention in Remark~\ref{rem.conv} we have
$(D-1) \neq \emptyset$. 
Since $K(x) \leq K(p)+O(1)$ for all $p$ satisfying $U(p)=x$
(the $O(1)$ term is a nonnegative constant
independent of $x$ and $p$) we have
$K(p)+(b-f_x(d)) < |p|$ for all $p \in (D-1)$, with
$f_x(d)=K(p)-K(x)+O(1) >0$.
If $\ld{\beta}^{(2)}(x)=d$ then $|p|\leq K(p)+\beta$ for $p \in (D-1)$.
Hence $\beta \geq b-f_x(d)$.

(Right $\leq$):
By assumption
$Q^d(x) \geq  2^{-b} Q(x)$. 
Let $P$ be the set of programs $p$ such that $U(p)=x$, 
the set $Q$ consist of programs $q \in P$ such that
$|q| \geq |p|+b$ with $|q|$ least and $p \in P$, while the sets $D,(D-1)$
are defined above. 
Then $D,Q \subseteq P$, $(D-1) \subseteq D$, and 
\begin{align*}
Q^d(x) &= \sum_{U^d(p)=x} 2^{-|p|}  
\geq 2^{-b} \sum_{U(p)=x} 2^{-|p|}
\\&=\sum_{p \in P} 2^{-|p|-b}
\geq \sum_{q \in Q} 2^{-|q|}.
\end{align*}
The last sum is at most the first sum and the programs of $Q$
constitute all the programs in $P$ that have length at least
$K(x)+b$ (a shortest program in $P$ trivially having length $K(x)$ and
therefore a shortest program in $Q$ has length $K(x)+b$).
Since $D \subseteq P$ either $D=Q$
or $D \bigcap Q \neq \emptyset$.
It follows that
there exist programs $q \in D$ at least as short as 
the shortest program of $Q$. 
Since a shortest program in $Q$ has length $K(x)+b$ 
therefore $|q| \leq K(x)+b$. 
By the convention of Remark~\ref{rem.conv} $(D-1) \neq \emptyset$. 
Choose $r \in (D-1)$ and $q \in D$ with $|q| \leq K(x)+b$ such that
$||r|-|q||$ is minimal. Additionally,
$K(r) \geq K(x)+O(1)$ (since $U(r)=x$ and an $O(1)$ term
independent of $r$ and $x$). Therefore
$|r| \leq K(r) +b +g_x(d)$ with 
$g_x(d)= ||r|-|q||-O(1)$.
It follows that
if $\ld{\beta}^{(2)}(x)=d$ then $\beta \leq b+g_x(d)$. 
\end{proof}

According to Theorems~\ref{Teo1}, \ref{Teo2} version 2 
implies version 1 with the same depth $d$ and almost the
same parameter $b$, while 
version 1 implies version 2 with the same depth $d$
but more uncertainty in the parameter $b$. 
We choose version 2 as our final definition of logical depth.

\section{The graph of logical depth}\label{instabilidade}

Even slight changes of the 
significance level $b$ can cause large changes
in logical depth.

\begin{lemma}\label{lemma1}
Every function $\phi$ such that
$\phi(x) \geq \min\{d: U^d(p)=x ,\; |p|=K(x)\}$ 
is incomputable and
rises faster than any computable function.
\end{lemma}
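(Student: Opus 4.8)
The plan is to reduce the computation of $K$ to the availability of a time bound that dominates the minimal running time of a shortest program. Abbreviate $g(x)=\min\{d: U^d(p)=x,\ |p|=K(x)\}$, so that by hypothesis $\phi(x)\geq g(x)$ for every $x$. Fix once and for all a constant $c$ with $K(x)\leq |x|+c$ for all $x$. The single fact that powers the whole argument is that $K^{\phi(x)}(x)=K(x)$. To see this, note that a shortest program for $x$ has length $K(x)$ and halts within $g(x)\leq\phi(x)$ steps by the definition of $g$, so it is found when we run everything for $\phi(x)$ steps; and no program shorter than $K(x)$ outputs $x$ at all. Hence the shortest among the programs of length at most $|x|+c$ that output $x$ within $\phi(x)$ steps has length exactly $K(x)$.

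First I would prove incomputability. Suppose $\phi$ were computable. On input $x$ one computes $\phi(x)$, then simulates each of the finitely many programs of length at most $|x|+c$ for $\phi(x)$ steps, and outputs the least length of one that halts with output $x$. By the fact above this procedure outputs $K(x)$, so $K$ would be computable, contradicting the (well-known) incomputability of $K$. Therefore $\phi$ is incomputable.

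Next I would upgrade this to the growth statement, reading ``rises faster than any computable function'' as ``$\phi$ is not eventually dominated by any total computable function,'' i.e.\ for every total computable $f$ one has $\phi(x)>f(x)$ for infinitely many $x$. Suppose instead that some total computable $f$ satisfies $\phi(x)\leq f(x)$ for all but finitely many $x$. Redefine $f$ on the finite exceptional set by hard-coding the finitely many values there (a computable modification) to obtain a total computable $f'$ with $\phi(x)\leq f'(x)$ for \emph{every} $x$. Then $f'(x)\geq\phi(x)\geq g(x)$, so running the same simulation with the computable bound $f'(x)$ in place of $\phi(x)$ again yields $K^{f'(x)}(x)=K(x)$ and thus a computation of $K$, a contradiction. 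Hence no total computable function eventually dominates $\phi$, which is exactly the claimed growth.

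The only delicate point is bookkeeping, not mathematics: one must restrict the simulation to programs of length at most $|x|+c$ so that only finitely many machines are run, each for finitely many steps, and then verify that the shortest halting one has length precisely $K(x)$ — which rests on both halves of the displayed fact (some length-$K(x)$ program halts in time $\leq\phi(x)$, and nothing shorter computes $x$). I expect no genuine obstacle here; the whole proof is simply a time-bounded rerun of the standard argument that $K$ is incomputable.
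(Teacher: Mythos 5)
Your core reduction is exactly the paper's: a computable bound on the minimal running time of a length-$K(x)$ program turns the computation of $K(x)$ into a finite, time-bounded simulation, and the contradiction then comes from the impossibility of computing $K$. Two differences from the paper are worth noting, both in your favor. First, where the paper derives its contradiction from G\'acs' theorem that $K(K(x)|x) \geq \log n - 2\log\log n - O(1)$ for some strings of each length $n$ (computability of $\phi$ would force $K(K(x)|x)=O(1)$), you invoke only the elementary incomputability of $K$, which suffices and is more self-contained. Second, you spell out the ``rises faster than any computable function'' half via the finite-patching argument (a computable $f$ dominating $\phi$ almost everywhere can be modified on the finitely many exceptional points to dominate everywhere, reproducing the contradiction); the paper leaves this step implicit, proving only that no computable function can dominate $\min\{d: U^d(p)=x,\ |p|=K(x)\}$ pointwise.

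There is, however, one genuine technical slip: the complexity $K$ in this paper is \emph{prefix} Kolmogorov complexity, and there is no constant $c$ with $K(x)\leq |x|+c$ for all $x$. If there were, then since shortest programs for distinct strings are distinct elements of the prefix-free domain of $U$, the Kraft inequality would give $1 \geq \sum_x 2^{-K(x)} \geq \sum_x 2^{-|x|-c} = \sum_n 2^n\cdot 2^{-n-c} = \infty$, a contradiction. The correct bound is $K(x) \leq |x|+K(|x|)+O(1) \leq |x|+O(\log |x|)$, which is exactly why the paper simulates programs of length $n+O(\log n)$ rather than $n+O(1)$. This does not damage your argument: replace your length cutoff $|x|+c$ by $|x|+2\log |x|+c$ throughout; the set of simulated programs remains finite, the shortest halting one still has length exactly $K(x)$, and the rest of your proof goes through verbatim.
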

\begin{proof}
By \cite{Ga74} we have $K(K(x)|x) \geq \log n - 2 \log \log n-O(1)$.
(This was improved to the optimal $K(K(x)|x) \geq \log n -O(1)$ recently
in \cite{BS03}.) Hence there is no computable function $\phi(x) \geq
\min\{d: U^d(p)=x ,\; |p|=K(x)\}$. If there were, then we could run
$U$ for $d$ steps on any program of length $n+O(\log n)$. Among the 
programs that halt within $d$ steps we select the ones which output $x$.
Subsequently, we select from this set a program of minimum length.
This is a shortest program for $x$ of length $K(x)$. Therefore,
the assumption that $\phi$ is computable implies that $K(K(x)|x) = O(1)$ and hence a contradiction.
\end{proof}

\begin{definition}\label{bbfunction}
\rm
The \em Busy Beaver function \em $BB: \nn \rightarrow \nn$ is defined by
\[
BB(n) = \max \{d:|p|\leq n \wedge U^d (p) < \infty\}
\]
\end{definition}

The following result was mentioned informally in \cite{ben88}.
\begin{lemma}\label{lemma2}
The running time of a program $p$ 
is at most $BB(|p|)$. 
The running time of a shortest program for a string $x$ of length $n$
is at most $BB(n+O(\log n))$.
\end{lemma}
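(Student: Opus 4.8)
The statement to prove has two parts. The first, that the running time of a program $p$ is at most $BB(|p|)$, follows almost immediately from the definition of the Busy Beaver function. The plan is to observe that $BB(|p|)$ is defined as the maximum number of steps $d$ taken by any halting program of length at most $|p|$. Since $p$ itself is a program of length $|p| \leq |p|$ that halts (assuming $U(p) < \infty$; if $p$ does not halt the claim is vacuous), its own running time is one of the values over which the maximum is taken, and hence is bounded above by that maximum. This is essentially unpacking Definition~\ref{bbfunction}.

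For the second part, the plan is to apply the first part to a shortest program for $x$. Let $x$ have length $n$ and let $x^*$ be a shortest program for $x$, so $U(x^*) = x$ and $|x^*| = K(x)$. By the first part of the lemma, the running time of $x^*$ is at most $BB(|x^*|) = BB(K(x))$. The remaining step is to bound $K(x)$ by $n + O(\log n)$. This is a standard fact in Kolmogorov complexity: one can always describe $x$ by a self-delimiting encoding of its literal bits together with a program to copy them, which costs $n + O(\log n)$ bits (the $O(\log n)$ overhead coming from encoding the length $n$ in a prefix-free way). Hence $K(x) \leq n + O(\log n)$.

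Finally I would combine these two observations. Since $BB$ is monotonically nondecreasing in its argument (a longer length bound only enlarges the set of programs considered in the maximum, so the maximum cannot decrease), from $K(x) \leq n + O(\log n)$ we get
\[
BB(K(x)) \leq BB(n + O(\log n)).
\]
Chaining this with the bound from the first part yields that the running time of $x^*$ is at most $BB(n + O(\log n))$, as required.

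There is no serious obstacle here; the lemma is essentially a direct consequence of the definitions together with the elementary upper bound on Kolmogorov complexity. The only point that warrants a moment's care is the implicit monotonicity of $BB$, which is needed to pass from the bound $BB(K(x))$ to $BB(n + O(\log n))$, and the standard $n + O(\log n)$ upper bound on $K(x)$ via a prefix-free self-delimiting description of the literal string. Neither of these requires any real work beyond citing the definitions and the basic coding argument.
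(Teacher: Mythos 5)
Your proposal is correct and follows essentially the same route as the paper: the first part is read off directly from Definition~\ref{bbfunction}, and the second part combines the standard bound $K(x) \leq n + O(\log n)$ (via a self-delimiting encoding of the literal string) with the first part. The only difference is that you make explicit the monotonicity of $BB$, which the paper leaves implicit; this is a harmless (indeed helpful) addition, not a different argument.
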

\begin{proof}
The first statement of the lemma follows from 
Definition~\ref{bbfunction}. For the second statement we use
the notion of a simple prefix-code called a {\em self-delimiting} code. 
This is obtained by reserving one
symbol, say 0, as a stop sign and encoding a string $x$ as $1^x 0$.
We can prefix an object with its length and iterate
this idea to obtain ever shorter codes:
$\bar{x}= 1^{|x|} 0 x$ with
length $|\bar{x}| = 2|x| + 1$, and 
$x'= \overline{|x|} x$ of length $|x|+2||x||+1=
|x|+O(\log |x|)$ bits. From this code $x$ is readily extracted.
The second statement follows since $K(x) \leq |x|+O(\log |x|)$. 
\end{proof}

\begin{theorem}\label{theo.3}
There is an infinite sequence of strings $x_1,x_2, \ldots$ with
$|x_{j+1}|=|x_j|+1$ ($j \geq 1$) and an infinite sequence 
$b_1,b_2, \ldots$ 
of integers such that $f(j)=\ld{b_j}^{(2)}(x_j)$ is incomputable (faster
than any computable function)  and
$g(j)=\ld{b_j+1}^{(2)}(x_j)$ is computable. 
\end{theorem}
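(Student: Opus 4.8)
The plan is to exhibit, for each sufficiently large length $n$, a single string $x$ of length $n$ together with a significance level $b$ at which $\ld{b}^{(2)}(x)$ is forced to be huge while $\ld{b+1}^{(2)}(x)$ is small and, crucially, \emph{computable}; taking one such string per length then yields the sequence with $|x_{j+1}|=|x_j|+1$. The design is to build $x$ with two sharply separated computation regimes: a slow regime, in which every program that is only $b$-incompressible must unwind a deep computation whose running time $T$ grows faster than any computable function, and a fast regime, accessible only after relaxing the significance by one bit, in which a print-like program outputs $x$ in $O(n\log n)$ steps (as in the Remark following Definition~\ref{def.final}). The level $b$ is placed exactly one below the smallest compressibility attained by any fast program, so that incrementing $b$ by one unlocks the fast regime.

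The computability of $g(j)=\ld{b_j+1}^{(2)}(x_j)$ I would obtain from the observation that the global minimal running time
\[
d^\ast(x)=\min\{d:\exists p,\ U^d(p)=x\}
\]
is a computable function of $x$: for $d=1,2,\dots$ simulate every input of length at most $d$ for $d$ steps (a computation halting within $d$ steps reads at most $d$ input bits) and return the first $d$ for which some simulation outputs $x$; this halts because the literal-copy program succeeds within $O(|x|\log|x|)$ steps. Since $\ld{b}^{(2)}(x)\ge d^\ast(x)$ for every $b$, it suffices to arrange the globally fastest program to be $(b_j+1)$-incompressible; then $g(j)=d^\ast(x_j)$, which is computable in $j$ once $x_j$ is produced computably from $j$.

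For the incomputability of $f(j)=\ld{b_j}^{(2)}(x_j)$ I would argue by domination. The construction must guarantee property (A): no program computing $x_j$ faster than the deep time $T_j$ is $b_j$-incompressible, i.e.\ every fast program can be compressed by at least $b_j+1$ bits. Granting (A), every $b_j$-incompressible program for $x_j$ runs for at least $T_j$ steps, so $f(j)\ge T_j$. If $T_j$ is made to rise faster than any computable function---which is where Lemma~\ref{lemma1} and an explicit encoding of a halting-type computation into $x_j$ enter---then $f$ dominates an incomputable function and hence itself rises faster than any computable function, while Lemma~\ref{lemma2} bounds it above by $BB(n+O(\log n))$.

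The main obstacle is establishing property (A) together with the exact single-bit placement of $b_j$. Deepness of $x_j$ at a fixed significance only excludes fast programs of \emph{small} compressibility, whereas (A) must exclude every fast program of compressibility up to $b_j$; one cannot get this for free, since decompressing a near-incompressible program may itself be slow and so does not contradict deepness. I therefore expect the heart of the proof to be an explicit coding of $x_j$ in which a program may emit $x_j$ quickly only if it literally carries the (print-like, hence highly compressible) answer, so that the minimal compressibility over all fast programs equals $|x_j|-K(x_j)+O(1)$ and is attained by the print program; setting $b_j$ one below this value then yields both (A) and $g(j)=d^\ast(x_j)$. Pinning the transition to precisely one bit, and producing such a string at every length so that the lengths increase by exactly one (padding to fill any gaps while preserving the sharp threshold), are the delicate quantitative points the construction must control.
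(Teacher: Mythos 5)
Your proposal is a plan rather than a proof: every ingredient it needs is stated as a requirement, and the one requirement carrying all of the difficulty --- your property (A) together with the exact one-bit placement of $b_j$ --- is never established. You flag this yourself (``the delicate quantitative points the construction must control''), and you even explain correctly why it is hard: excluding \emph{every} fast program of compressibility up to $b_j$, not just the short ones, does not come for free from deepness. But nothing in the proposal discharges that obligation. Showing that there exist strings, one of every length, for which the minimal compressibility over all fast programs equals $|x_j|-K(x_j)+O(1)$ and is attained by the print program, with the slow regime's time $T_j$ exceeding every computable function, is a strong Barzdin-type statement that would itself require a substantial argument; deferring it means the heart of Theorem~\ref{theo.3} is missing. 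The pieces you do prove are sound: the dovetailing argument that $d^{\ast}(x)=\min\{d: \exists p \; U^d(p)=x\}$ is computable given $x$ is correct (and is a point the paper uses only implicitly). Note, however, a tension you leave unresolved: for $g(j)=d^{\ast}(x_j)$ to be computable in $j$ you need $x_j$ computable from $j$, whereas the halting-type strings you invoke for the incomputability of $f$ are precisely not computable from $j$.

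The paper's own proof avoids anything like property (A) and is much softer. It establishes only two endpoint behaviors: at significance $0$, Lemma~\ref{lemma1} (resting on G\'acs' bound $K(K(x)|x)\geq \log n-2\log\log n-O(1)$) shows that any function at least the minimal running time of a shortest program rises faster than any computable function; and at significance $n-K(x)+O(\log n)$, the self-delimiting print program $rq$ gives $\ld{n-K(x)+O(\log n)}^{(2)}(x)=O(n\log n)$, a computably bounded (indeed computable) value. Since $\ld{b}^{(2)}(x)$ is nonincreasing in $b$ by Definition~\ref{def.final}, the drop from the incomputably large value at $b=0$ to the computable value at $b=n-K(x)+O(\log n)$ must occur across some one-bit increment, and the $b_j$ of the theorem is such a threshold; since the theorem asserts only the \emph{existence} of the sequences $x_j$ and $b_j$, no explicit construction pinning down the threshold is attempted. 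Your route, if completed, would yield a stronger and more explicit result than the paper's, but as written it postpones exactly the step that constitutes the proof, while the paper's non-constructive endpoint-plus-monotonicity argument sidesteps that step entirely.
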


\begin{proof}
Let $\phi(x)\geq \ld{0}^{(2)}(x)$ be an incomputable function 
as in Lemma~\ref{lemma1}. 
The function $\psi$ defined by
$\psi(x)=\ld{n-K(x)+O(\log n)}^{(2)}(x)= O(n \log n)$ for $|x|=n$ is 
computable. Namely,
a self-delimiting encoding
of $x$ can be done in $n+O(\log n)$ bits. Let $q$ be such an encoding
with $q=1^{||x||}0|x|x$ (where $||x||$ is the length of $|x|$).
Let $r$ be a self-delimiting program of $O(1)$ bits which prints the
encoded string. Consider the program $rq$.
Since $x$ can be compressed to length $K(x)$, 
the running time  $\ld{n-K(x)+O(\log n)}^{(2)}(x)$ is at most
the running time of $rq$ which is $O(n \log n)$.
\end{proof}
\begin{corollary}\label{cor.2}
\rm
Define the function $h$ by
$h(j)= f(j)-g(j)$.
Then $h$ is a gap in the logical depths of which 
the significance differs by 1.
The function $h(j)$ rises faster than any computable function but not
faster than $BB(|x_j|+O(\log |x_j|)$ by Lemma~\ref{lemma2}.
\end{corollary}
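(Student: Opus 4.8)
The plan is to read off both assertions of the corollary directly from Theorem~\ref{theo.3}, combining it with the monotonicity of $\ld{b}^{(2)}$ in the significance parameter and with Lemma~\ref{lemma2}. Recall that Theorem~\ref{theo.3} supplies the sequences $(x_j)$ and $(b_j)$ for which $f(j)=\ld{b_j}^{(2)}(x_j)$ rises faster than any computable function while $g(j)=\ld{b_j+1}^{(2)}(x_j)$ is computable. Since the two significance levels $b_j$ and $b_j+1$ differ by exactly one, $h=f-g=\ld{b_j}^{(2)}(x_j)-\ld{b_j+1}^{(2)}(x_j)$ is by definition the one-step gap in logical depth, which already settles the first, descriptive assertion.

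For the lower claim I would show that $h$ is not bounded above by any computable function. Fix an arbitrary computable $\gamma$. Since $g$ is computable, so is $\gamma+g$; and because $f$ rises faster than any computable function, $f(j)>\gamma(j)+g(j)$ for infinitely many $j$. Subtracting the finite value $g(j)$ gives $h(j)=f(j)-g(j)>\gamma(j)$ for infinitely many $j$, and as $\gamma$ was arbitrary this is exactly the statement that $h$ rises faster than any computable function. The same bookkeeping shows $h$ is incomputable: were $h$ computable, then $f=h+g$ would be computable as a sum of two computable functions, contradicting Theorem~\ref{theo.3}.

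For the upper bound I would route everything through the shortest program and a Busy-Beaver ceiling. By monotonicity in the significance parameter, $h(j)\leq f(j)=\ld{b_j}^{(2)}(x_j)\leq \ld{0}^{(2)}(x_j)$, since $b_j\geq 0$ and $g(j)\geq 0$. As observed in the remark following Definition~\ref{def.final}, any halting computation $U^d(x_j^*)=x_j$ satisfies $d\geq \ld{0}^{(2)}(x_j)$, so $\ld{0}^{(2)}(x_j)$ is at most the running time of the first shortest program $x_j^*$. Lemma~\ref{lemma2} bounds that running time by $BB(|x_j|+O(\log|x_j|))$, using $K(x_j)\leq |x_j|+O(\log|x_j|)$ and the monotonicity of $BB$. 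Chaining the inequalities yields $h(j)\leq BB(|x_j|+O(\log|x_j|))$, which is the second assertion.

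Since the substantive work was already carried out in Theorem~\ref{theo.3}, I expect no serious obstacle here: the argument is a subtraction of a computable function from a faster-than-computable one, together with a standard Busy-Beaver upper bound. The only points that require care are the precise reading of ``rises faster than any computable function'' (I use the robust sense that $f$ is not bounded above by any computable $\gamma$, so that closure of the computable functions under adding $g$ can be invoked on the witnessing infinitely-many $j$), and the justification that the length fed into $BB$ is $|x_j|+O(\log|x_j|)$ rather than something larger. The latter is handled by reducing, via monotonicity, to the shortest program $x_j^*$ instead of the possibly long fast program that actually realizes $f(j)$.
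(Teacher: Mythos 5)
Your proof is correct and takes essentially the same approach as the paper, which leaves the corollary's justification implicit: $h=f-g$ outgrows every computable function because $g$ is computable while $f$ is bounded by no computable function (Theorem~\ref{theo.3}), and the upper bound comes from dominating $f(j)$ by the running time of a shortest program for $x_j$, which Lemma~\ref{lemma2} caps at $BB(|x_j|+O(\log |x_j|))$. Your extra care (closure of computable functions under addition, and the monotonicity chain $h(j)\leq f(j)\leq \ld{0}^{(2)}(x_j)$ reducing to the shortest program) just makes explicit what the paper takes for granted.
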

\begin{corollary}\label{theo.BB}
\rm
Let $s^*(j)=\ld{O(1)}^{(2)}(x_j)$ be the minimal time of a computation 
of a shortest program for $x_j$, and
$f$ be the function in statement of Theorem~\ref{theo.3}. Then
$f(j) \leq s^*(j)  \leq BB(|x_j|+O(\log |x_j|))$.
\end{corollary}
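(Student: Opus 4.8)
The plan is to prove the two inequalities of the displayed chain $f(j)\leq s^*(j)\leq BB(|x_j|+O(\log|x_j|))$ separately, reading $s^*(j)=\ld{O(1)}^{(2)}(x_j)$ as in the statement and $f(j)=\ld{b_j}^{(2)}(x_j)$ as supplied by Theorem~\ref{theo.3}. The only two ingredients I would need are the monotonicity of $\ld{b}^{(2)}(x)$ in $b$, recorded in the Remark following Definition~\ref{def.final}, and Lemma~\ref{lemma2}.

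For the left inequality $f(j)\leq s^*(j)$ I would invoke monotonicity directly. Increasing $b$ only enlarges the admissible set $\{p:|p|\leq K(p)+b\}$ over which the least computation time is taken, so $\ld{b}^{(2)}(x)$ is nonincreasing in $b$. Since the significance level $b_j$ produced by Theorem~\ref{theo.3} is at least the constant hidden in the $O(1)$ defining $s^*$, monotonicity gives $f(j)=\ld{b_j}^{(2)}(x_j)\leq\ld{O(1)}^{(2)}(x_j)=s^*(j)$.

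For the right inequality $s^*(j)\leq BB(|x_j|+O(\log|x_j|))$ I would use Lemma~\ref{lemma2}. A shortest program $p$ for $x_j$ is $O(1)$-incompressible, so it lies in the feasible set of the minimization defining $\ld{O(1)}^{(2)}(x_j)$; hence $s^*(j)$ is at most the running time of $p$. Now $|p|=K(x_j)\leq|x_j|+O(\log|x_j|)$, so by the first statement of Lemma~\ref{lemma2} the running time of $p$ is at most $BB(|p|)$, and since $BB$ is nondecreasing (its defining maximum ranges over a set that only grows as the length bound increases) this is at most $BB(|x_j|+O(\log|x_j|))$. Chaining the three bounds gives the claim.

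The step I expect to be the real obstacle is the constant bookkeeping in the left inequality: one must confirm that the transition significance $b_j$ from Theorem~\ref{theo.3} is genuinely at least the fixed $O(1)$ incompressibility constant of shortest programs, so that monotonicity points in the asserted direction and shortest programs actually compete in the $\ld{b_j}^{(2)}$ minimization. Once that ordering of significance levels is pinned down, both inequalities are routine. I would also emphasize that the substance of the corollary lies not in these inequalities but in what they sandwich: by Lemma~\ref{lemma1} both $f$ and $s^*$ rise faster than any computable function, while $BB$ is itself incomputable, so the statement locates the minimal running time of shortest programs strictly between the computable functions and the growth rate of the Busy Beaver function.
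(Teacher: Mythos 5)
Your proof follows essentially the same route as the paper's: the left inequality $f(j)\leq s^*(j)$ via the monotonic nonincrease of $\ld{b}^{(2)}(x)$ in $b$ (which the paper invokes directly from Definition~\ref{def.final}), and the right inequality via Lemma~\ref{lemma2}. The constant-bookkeeping subtlety you flag---that the transition significance $b_j$ of Theorem~\ref{theo.3} must be at least the $O(1)$ incompressibility constant of shortest programs for monotonicity to point the right way---is left entirely implicit in the paper's two-sentence argument, so your version is, if anything, the more careful rendering of the same proof.
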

Namely, the logical depth function $\ld{b}^{(2)}(x)$ 
is monotonic nonincreasing in the significance argument $b$
for all strings $x$ by its Definition~\ref{def.final}. 
By Lemma~\ref{lemma2} and Corollary~\ref{cor.2} the 
Corollary~\ref{theo.BB} follows.

\section{Conclusion}
We resolve quantitative relations between
the two versions of logical depth in the literature. 
One of these relations was known by another proof, the
other relation is new.
We select one version that approximately implies the other,
and study the
the behavior of the resulting 
logical depth function associated
with a string $x$ of length $n$. 
This function is monotonic nonincreasing. For
argument 0 the logical depth is at least the minimum running time of the
computation from a shortest program for $x$. The
function decreases to $O(n \log n)$ for the argument 
$|x|-K(x)+O(\log |x|)$.
We show that there is an infinite sequence of strings such that 
maximum gap of
logical depths resulting from consecutive significance levels
rises faster than any computable function,
that is, incomputably fast, but not more than the Busy Beaver function.
This shows that 
logical depth can increase tremendously for
only an incremental difference in significance.
Moreover, the
minimal computation times of associated shortest 
programs rises incomputably fast  but not so fast as the 
Busy Beaver function.

\section*{Acknowledgment}
We thank Anonymus for comments and the new proof of Theorem~\ref{Teo1}.

\bibliographystyle{plain}

\end{document}